\documentclass[12pt]{article}
\usepackage{amsmath}
\usepackage{amssymb}
\usepackage{amsthm}
\usepackage{mathrsfs}
\usepackage{setspace}
\usepackage{titling}
\usepackage{esint}
\usepackage[hang,flushmargin]{footmisc}
\makeatletter
\def\th@plain{%
  \thm@notefont{}
  \itshape 
}
\def\th@definition{%
  \thm@notefont{}
  \normalfont 
}
\makeatother
\usepackage{textgreek}
\usepackage{fullpage}
\usepackage{authblk}
\usepackage[pdftex]{graphicx}
\usepackage{titling}
\usepackage{subfigure}
\newtheorem*{theorem}{Theorem}

\begin{document}
\doublespacing
\begin{titlingpage}
\title{A Thermodynamic Structure of Asymptotic Inference}
\author{Willy Wong\thanks{Dept. of Informatics, Faculty of Information Science and Electrical Engineering, Kyushu University \ \texttt{\small willy@inf.kyushu-u.ac.jp}}}
\date{\today}
\maketitle
\thispagestyle{empty}

\begin{abstract}
A thermodynamic framework for asymptotic inference is developed in which sample size and parameter variance define a state space. Within this description, Shannon information plays the role of entropy, and an integrating factor organizes its variation into a first-law–type balance equation. The framework supports a cyclic inequality analogous to a reversed second law, derived for the estimation of the mean. A non-trivial third-law–type result emerges as a lower bound on entropy set by representation noise. Optimal inference paths, global bounds on information gain, and a natural Carnot-like information efficiency follow from this structure, with efficiency fundamentally limited by a noise floor. Finally, de Bruijn’s identity and the I–MMSE relation in the Gaussian-limit case appear as coordinate projections of the same underlying thermodynamic structure. This framework suggests that ensemble physics and inferential physics constitute shadow processes evolving in opposite directions within a unified thermodynamic description.
\end{abstract}
\end{titlingpage}
\section{Introduction}
Statistical inference is the process by which sampled data are used to infer properties of an underlying probability distribution. As the number of samples $m$ becomes large, a familiar asymptotic structure emerges: (i) estimators with asymptotic normality have variances that scale as $1/m$; (ii) independent observations contribute additively; and (iii) sampling distributions become Gaussian. Taken together, these properties can be shown to admit a thermodynamic-like mathematical structure. Although statistical inference itself does not describe thermal interactions, the resulting structure nonetheless exhibits a balance law, a natural third law--type constraint and, under appropriate conditions, a cyclic inequality that closely parallels the fundamental laws of thermodynamics.

These ideas originated from considerations in sensory neuroscience. A sensory receptor does not observe a stimulus parameter directly; instead, it observes noisy microscopic events and must infer the macroscopic stimulus intensity from those events. From a  minimal model of sensory inference, a reverse second-law--type inequality can be derived for cycles of sensory inputs. This correspondence is rigorous and not a metaphor: a restricted form of the inequality has already been tested extensively in neurophysiological recordings. Appendix~1 reviews the relevant sensory background and evidence, while Appendix~2 summarizes the reverse second-law inequality.

This paper takes the next theoretical step. Beginning with the second-law inequality, we construct a thermodynamic state space for inference defined by two coordinates: the sample size $m$ and the observation variance $\sigma^2$ associated with the parameter of interest. The thermodynamic construction is macroscopic, yielding state functions, balance relations, and an effective integrating factor that converts changes in Shannon information into a Clausius-like form. Although this framework emerged from sensory neuroscience, it is not specific to that domain: since the sensory problem is fundamentally that of parameter estimation, the same structure applies---with appropriate qualifications---to metrology, i.e. inference in measurement science.

The inference problem considered here departs from the study of static systems and instead addresses situations where the parameter of interest may change across operating conditions. We assume that inference is conducted independently over non-overlapping epochs, each treated as locally stationary in the statistical sense so that asymptotic estimation theory applies within each epoch. The sample size denotes the number of observations allocated to a single epoch and may vary between epochs to reflect changes in operating conditions. The variance may likewise change, reflecting shifts in the underlying parameter value, distribution, or the heterogeneity of the sampled environment. In this way, non-stationarity is accommodated through shifts in thermodynamic coordinates between epochs.

The present framework shares conceptual ground with several earlier lines of work. Entropy has long been used as an inferential tool, most notably in Jaynes’ maximum-entropy framework \cite{jaynes1957information}. Likewise, information geometry provides a differential-geometric language for Fisher information and statistical manifolds (Amari) \cite{amari2000methods}. In non-equilibrium and stochastic thermodynamics, trajectory-level balance laws and entropy-production inequalities have been developed for physical systems in formulations such as stochastic energetics (Sekimoto) and stochastic thermodynamics (Seifert) \cite{sekimoto1998langevin,seifert2005entropy}. Statistical mechanics has also been used to establish a formal link between estimation error and information theory (e.g. Shental and Kantor) \cite{shental2009shannon}. In these developments, however, the physical quantities retain their standard interpretation tied to microscopic dynamics and energy exchange. In contrast, the framework developed here is a formal analogy: the thermodynamic structure does not describe heat or work, but rather the asymptotic behaviour of inference.

The main contributions of the paper are as follows: (1) a thermodynamic state-space formulation of asymptotic inference in $(m,\sigma^2)$ space, with an explicit uncertainty state function; (2) a reversed second-law–type cyclic inequality for mean inference and an integrating factor that yields a first-law–type balance relation; (3) a third-law–type lower bound on entropy imposed by representation noise, which sets a fundamental noise floor and limits efficiency; (4) optimal inference paths, global bounds on information gain, and a Carnot-like notion of inferential efficiency; and (5) a unification of familiar entropy–information identities, specifically the Gaussian-limit form of the de Bruijn and I–MMSE results, as coordinate projections of a single underlying structure.

With this perspective in place, we begin by contrasting inference with the conventional ensemble-based viewpoint of thermal physics.

\section{Ensemble vs.\ inferential physics}
In thermal physics, the macroscopic description is specified by constraints on variables such as energy or temperature, and physical observables are obtained by averaging over the compatible microscopic configurations. Entropy then quantifies the residual uncertainty about the microstate given the macrostate, and the second law characterizes the typical direction of macroscopic evolution under many independent microscopic interactions. Inference reverses this logic. Consider a tiny probe that has access only to microscopic events, but no direct knowledge of the macroscopic parameters that generate them. Here the macrostate is not prescribed in advance; it is the quantity to be inferred from data. In this sense, statistical inference constitutes the inverse problem to the usual ensemble construction of equilibrium physics.

This reversal is not merely conceptual. The same Gaussian limit that underlies equilibrium thermodynamics also governs asymptotic inference, but it operates in the opposite direction. In thermal physics, many independent randomizing events cause the system to ``forget'' its initial conditions and drive it toward higher entropy. In inference, many independent samples instead sharpen the likelihood, reduce estimator fluctuations and cause the estimator distribution to approach a Gaussian form through the central limit theorem. From here, the asymptotic uncertainty decreases systematically with increasing sample size.

The contrast between the two processes can be summarized succinctly: thermal physics describes the loss of information that emerges from many independent microscopic interactions, whereas inference describes the reduction of uncertainty under repeated sampling. The technical question addressed in this paper is how far this parallel between ensemble physics and inference can be carried. Specifically, we will show that inference in the large sample limit can be endowed with a thermodynamic structure, including state variables, a balance relation and, with additional conditions, a second-law like inequality in $(m,\sigma^2)$ space.

Next, we examine scenarios where the inferential viewpoint is grounded conceptually, namely in sensory transduction and metrology.

\section{Inference in sensory transduction and metrology}
\label{sec:inference}

Sensory transduction provides a concrete and experimentally accessible realization of inferential physics. Sensory receptors interact with stimuli at the periphery through stochastic microscopic events and must estimate macroscopic stimulus parameters from those events. The outcome of this inferential process is accessible experimentally through the neural response. Estimation by the sensory system is achieved through repeated sampling in time. Microscopic events are integrated over successive, approximately independent epochs, and the resulting samples are averaged to form an estimate of stimulus magnitude at each moment. Repetition reduces noise through averaging, and the residual uncertainty of the estimate can be quantified statistically. It is this process of repeated sampling that frames sensory transduction naturally within the realm of statistical inference.

In the large-$m$ limit, the uncertainty in the estimated mean is asymptotically Gaussian with variance $\sigma^2/m$, where $\sigma^2$ denotes the variance of a single observation. In addition, sensory systems are subject to representation noise, which we model as additive, Gaussian and stimulus-independent with variance $\sigma_R^2$. The total uncertainty in the estimate is therefore given by the convolution of these two Gaussian contributions. As a result, the  differential entropy of the estimate takes a closed-form expression depending only on the sample size and the variance \cite{norwich1977information,norwich1993information}.  Note that the calculation of the differential entropy is of the asymptotic estimator distribution, rather than the entropy of the sample vector or the likelihood itself.

Three additional empirical assumptions are required to complete the description. First, the variance of stimulus fluctuations $\sigma^2(\mu)$ is assumed to be a continuous, non-decreasing function of the mean stimulus intensity $\mu$ (c.f. fluctuation scaling \cite{eisler2008fluctuation}). Second, the sample size evolves dynamically, relaxing monotonically toward a unique equilibrium or optimal value $m_{eq}$. Third, since the variance increases monotonically with $\mu$, it is not unreasonable to assume that the optimal sample size increases monotonically with stimulus magnitude $m_{eq}(\mu)$ although not necessarily with the same functional form as $\sigma^2(\mu)$. In what follows, sample size $m$ is treated as a continuous variable, with the approximation justified in the asymptotic regime.

Under these assumptions, a dynamical state-space description follows:
\begin{align}
\dot{m} &= g\!\left(m, m_{eq}\right), \label{diffeqn} \\
H &= \frac{1}{2}\log\!\left(\frac{\sigma^2}{m}+\sigma_R^2 \right) + \text{constant}, \label{entropyH}
\end{align}
where $g$ can be nonlinear and satisfies the conditions that (i) $g(m,m_{eq})=0$ if and only if $m=m_{eq}$, with $m$ tracking $m_{eq}$ monotonically; (ii) $\sigma_R^2$ is constant; and (iii) both $\sigma^2(\mu)$ and $m_{eq}(\mu)$ are continuous, non-decreasing functions of $\mu$. This formulation is deliberately abstract: it does not commit to a specific sensory modality or sampling mechanism, but isolates the generic nature of repeated measurement and averaging \cite{wong2025universal}.

The final step is to relate the uncertainty $H$ to an observable quantity. A direct proportionality between firing rate and uncertainty is assumed where
\begin{equation}
F = k H,
\label{fkh}
\end{equation}
with constant $k$ \cite{norwich1977information,norwich1993information}. This identification is not derived from a deeper microscopic theory; its justification lies entirely in its empirical consequences. Because firing rate is directly measurable, this assumption renders the framework testable experimentally. From this, the resulting \emph{ideal sensory unit} is already sufficient to derive a reversed second-law--type inequality for mean inference (Appendix~2). Additionally, with specific choices of $\sigma^2(\mu)$ and $m_{eq}(\mu)$, the model yields quantitative predictions for firing-rate responses, including a prediction concerning the firing rate that has been found to be obeyed nearly universally in sensory adaptation. The relevant experimental background and evidence are reviewed in Appendix~1.

Sensory systems are not unique in exhibiting this statistical structure. A similar description arises whenever an unknown parameter is inferred from repeated, independent measurements. In metrology, one estimates a parameter $\theta$ of a distribution $p(x\mid\theta)$ from a set of finite samples $x_1,\ldots,x_m$. In the asymptotic regime, estimators with asymptotic normality become Gaussian with variance proportional to $1/m$, independent observations contribute additively, and the estimator entropy admits a closed-form expression identical to (\ref{entropyH}) when representation noise is introduced. The inferential dynamics therefore run in the same direction as in sensory systems: increasing sample size reduces uncertainty. What metrology lacks, without further specification, is a dynamical sampling model (\ref{diffeqn}), and the identification of uncertainty with a directly observable response. Moreover, without the analogue of $F = kH$, the thermodynamic structure describes limits on inferential efficiency rather than experimentally measurable results. Nevertheless, the underlying structure remains the same. In this sense, sensory transduction and measurement theory occupy the same position within a single inferential state space.

Next we turn to formulate a thermodynamics of inference with state variables $(m,\sigma^2)$, entropy $H$, dynamics $\dot m=g(m,m_{eq})$, and trajectory-defined information production.

\section{Foundations for a thermodynamic framework}
The thermodynamic construction rests on two steps: (i) identifying an entropy balance equation, and (ii) ensuring that the entropy is a state function. For metrology, $H$ is a state function if it is a single-valued function of the chosen macroscopic state variables. In the sensory domain, establishing $H$ as a state function is more subtle because the system is driven and adaptive. At equilibrium, however, the sample size relaxes to a unique value $m_{\mathrm{eq}}(\mu)$, yielding $H_{\mathrm{eq}}(\mu)= \tfrac{1}{2}\log\!\left[\sigma_R^2 +\sigma^2(\mu)/m_{\mathrm{eq}}(\mu)\right]$ which depends only on $\mu$; therefore, $H$ is a state function. There is also experimental evidence demonstrating the path-independence of the firing rate $F$ (see discussion in \cite{wong2025universal}).

From here, the differential of the entropy of the asymptotic estimator distribution can be decomposed as
\begin{align}
\label{sensoryentropybalance}
dH &= \frac{\partial H}{\partial \sigma^2}\, d\sigma^2
     + \frac{\partial H}{\partial m}\, dm \\
   &= \delta H_{\mathrm{flux}} + \delta H_{\mathrm{relax}},
\end{align}
where $\delta H_{\mathrm{flux}} = (\partial H/\partial \sigma^2)\,d\sigma^2$ and $\delta H_{\mathrm{relax}} = (\partial H/\partial m)\,dm$ are the coordinate contributions to the exact differential $dH$.  Along relaxation trajectories, we define
\begin{equation}
d\mathscr{I} = -\left(\frac{\partial H}{\partial m}\right) dm,
\end{equation}
or equivalently in rate form $\dot{\mathscr I} = -(\partial H/\partial m)\dot m$. We refer to this quantity as \emph{information production}. Under the conditions discussed in Appendix~2, the theorem there gives the cyclic inequality
\begin{equation}
\label{sensorysecondlaw}
\oint d\mathscr{I} \ge 0.
\end{equation}
This is the second-law--type constraint for inference, with sign reversed relative to thermal physics. In this formulation, a cyclic change in the stimulus (or mean parameter) implies nonnegative net information gain over a cycle.

\section{Identifying an integrating factor and a first law}
Using the explicit form for $H$ in (\ref{entropyH}), we evaluate the differential to be
\begin{equation}
dH= \frac{d\sigma^2}{2 (\sigma^2+m \sigma_R^2)}  - \frac{\sigma ^2dm}{2m (\sigma^2+m \sigma_R^2)} \label{diffbalance}
\end{equation}
where $\sigma^2$ denotes either the variance of a single sensory observation, or in metrological settings the per-observation variance  for estimating parameter $\theta$. The flux contribution to $dH$ is
\begin{equation}
\label{clausiuslike}
dH_{\text{flux}} = \frac{d\sigma^2}{2 (\sigma^2 + m \sigma_R^2)} ,
\end{equation}
and introducing the new state variable
\begin{equation}
\Theta = 2 (\sigma^2 + m \sigma_R^2), \label{Theta}
\end{equation}
changes in the entropy now take the Clausius-like form $dH_{\text{flux}}=\Theta^{-1} d\sigma^2$. Here the state variable $\Theta$ plays a role similar to temperature as an integrating factor. As will be discussed later, at fixed large $m$, $\Theta$ is proportional to the inverse minimum mean square error (MMSE) for an averaged signal corrupted by additive Gaussian noise. An increase in variance produces a larger entropy change at small $m$ than at large $m$. For this reason, we refer to $\Theta$ as the \emph{uncertainty susceptibility}.

It is important to emphasize that in this framework the two coordinates $m$ and $\sigma^2$ are treated independently and interpreted operationally. The quantity $m$ denotes the number of observations assigned to a single inference interval, with intervals taken to be statistically independent and non-overlapping. $\sigma^2$ can change across different epochs, reflecting changes in the underlying parameter value or in the statistical structure of the observations between epochs. Asymptotic relations apply within each interval, with no information carried between intervals. As such, variations in $(m,\sigma^2)$ represents shifts in operating conditions.

Interpreting variance in (\ref{clausiuslike}) as the quasi-heat is natural: variance is positive and quadratic, and often corresponds directly to signal power. Next, we introduce a first-law analogy directly from the state differential (\ref{diffbalance}). Introducing $\Theta$, we get
\begin{equation}
\label{firstlaw}
d\sigma^2 = \Theta \, dH + \frac{\sigma^2}{m}\, dm.
\end{equation}
This is the first law for inferential systems with asymptotically normal estimators. Since $m$ is properly an integer, $dm$ should be seen as the continuum approximation to a small increment $\Delta m$ in the large-$m$ regime (with $\Delta m\ll m$). Since $\sigma^2$ is a state function, a cyclic process satisfies $\oint d\sigma^2 = 0$, and therefore
\begin{equation}
\oint \Theta \, dH = - \oint \frac{\sigma^2}{m}\, dm.
\end{equation}

The first law admits a clear interpretation in terms of uncertainty reduction under repeated sampling. Given sampled data $X_1, X_2, \dots, X_m$ with $\widehat{\theta}(X_1,\dots,X_m)$ an estimator of $\theta$. If the estimator satisfies asymptotic normality such that
\begin{equation}
\widehat{\theta}(X_1,\dots,X_m) \overset{d}{\to} \mathcal{N}\!\left(\theta,\sigma^2/m\right),
\end{equation}
for large $m$, this implies that $\mathrm{Var}(\hat{\theta}_m)  = \sigma^2/m$. Differentiating $\sigma^2 = m\mathrm{Var}(\hat{\theta}_m) $ gives
\begin{align}
d\sigma^2 &= m \; d\left(\frac{\sigma^2}{m}\right) + \frac{\sigma^2}{m} dm.
\end{align}
which, when we reintroduce $dH = m \Theta^{-1} d(\sigma^2/m)$ from (\ref{entropyH}), recovers (\ref{firstlaw}). The first law therefore reflects the $1/m$ scaling of estimator variance together with the additive contribution of independent samples. When variance is introduced into the system, it either contributes to increased uncertainty through the term $\Theta \, dH$, or is reduced through sampling via $(\sigma^2/m)\, dm$. This mirrors the thermodynamic interpretation in which injected energy either increases entropy or is converted into work.

The correspondence with classical thermodynamics can be made explicit by comparing with
\begin{equation}
\label{thermofirstlaw}
dU = T\, dS - P\, dV.
\end{equation}
In thermodynamics, intensive and extensive variables are paired so that $dU$ is extensive (i.e.\ Euler homogeneous of order one under system scaling). In (\ref{firstlaw}), $m$ can be thought of as being extensive since sampling is a resource and additional samples require effort. The variance $\sigma^2$ is also additive under aggregation and is therefore extensive as well. Unlike thermal systems, however, extensivity here depends on how scaling is defined in the inferential setting. In general, $m$ and $\sigma^2$ can be varied independently, and no unique notion of extensivity works across all state variables. If one were to restrict to transformations that preserve the constancy of the estimator variance $\mathrm{Var}(\hat{\theta}_m)=\sigma^2/m$, then increasing $\sigma^2$ requires a proportional increase in $m$.  In such an example, $\Theta = 2(\sigma^2 + m\sigma_R^2)$ scales linearly and is therefore extensive, while $H$, depending only on the ratio $\sigma^2/m$, remains invariant and is thus intensive.

Specific cases further clarify the analogy with thermal physics. For a quasi-isochoric process ($dm = 0$), the quasi-heat equals $d\sigma^2$. In the limit of vanishing representation noise, $\Theta \approx 2\sigma^2$, and the quasi-heat can therefore be written as $d\Theta/2$. For a one-dimensional ideal gas, the corresponding expression is $k_B dT/2$ when normalized to be intensive. Since the variance of the Maxwell–Boltzmann velocity distribution equals $k_B T$, this makes the two sides structurally similar. A gas-law-like relation follows from $\Theta = 2(\sigma^2+m\sigma_R^2)$, yielding $\sigma^2/m = \Theta/(2m) - \sigma_R^2$. In the limit of zero representation noise, this reduces to $(\sigma^2/m) m \approx \Theta/2$.  Finally, in the case where sampled values are positively correlated (due to sampling too quickly within a correlation time period, for example), an additional positive term is required to correct for the variance.  In the thermodynamic case, this would be analogous to having particles which have a \textit{repulsive} force, thus adding to the internal energy.

The first law outlines a number of processes similar to thermal physics.  We can define paths under constant uncertainty $H$ (quasi-adiabatic) such that $\sigma^2/m$ is constant.  In this case, $dH=0$ and $d\sigma^2=(\sigma^2/m)\, dm \propto dm$.  Another is where $\Theta=2(\sigma^2+m\sigma_R^2)$ is constrained to be constant (quasi-isothermal), for which $d\sigma^2=-\sigma_R^2 \, dm$.  Finally, in the case where the representation noise can be ignored, $\Theta=2\sigma^2$ and thus $dH=-(1/2m)\, dm$ for which $H$ takes on a logarithmic form.

The quasi-work or \emph{sampling work} term $(\sigma^2/m)\, dm$ can be interpreted as the incremental variance expenditure required to increase the effective sample size by $dm$. It has units of variance and reflects both the precision of the data and the work associated with changing the sample size. When variance is constant, this term can be integrated between $m_a$ and $m_b$ to give $\sigma^2\log(m_b/m_a)$.  If $m_a=m_b$, the sampling work is zero.  Sampling work has three main characteristics. First, it illustrates diminishing returns with increasing sample size. Second, larger $\sigma^2$ (noisier data) requires more work to achieve the same improvement in precision. Third, it shows scale invariance, in that the work depends on the ratio $m_b/m_a$ rather than on the absolute difference between the sample sizes. In the general case, the integral of sampling work is path dependent since $(\sigma^2/m)\,dm$ is not an exact differential, and its value non-zero when evaluated over a cycle. To illustrate this, consider a system with variance $\sigma_a^2$ sampled with sample size $m_a$. At $t=0$, the system is perturbed so that the variance increases to $\sigma_b^2 \ge \sigma_a^2$ while the sample size is increased to $m_b$. At a later time, the variance returns to $\sigma_a^2$ and the sample size returns to $m_a$. The sampling work over this cycle is
\begin{align}
\oint \frac{\sigma^2}{m} dm
&= \int_{m_a}^{m_b} \frac{\sigma_b^2}{m}\,dm
+ \int_{m_b}^{m_a} \frac{\sigma_a^2}{m}\,dm \\
&= (\sigma_b^2-\sigma_a^2)\log\!\left(\frac{m_b}{m_a}\right),
\end{align}
which can be interpreted geometrically as the weighted area enclosed in $(m,\sigma^2)$ space.

\section{Other thermodynamic relationships, and a third law}
Other thermodynamic potentials can be written down formally, but they do not have the same operational role as in thermal physics. The reason is that the variance susceptibility $\Theta=2(\sigma^2+m\sigma_R^2)$ is a derived state function which is fixed jointly by both $m$ and $\sigma^2$.  So the usual Legendre formalism of fixing one variable and extremizing over the conjugate variable does not have a direct analogue.  For example, one may define a quasi–Helmholtz free energy
\begin{equation}
\mathscr{A} = \sigma^2 - \Theta H .
\end{equation}
This is a legitimate state function but because $\Theta$ is not an independently controllable variable, $\mathscr{A}$ does not generate a distinct thermodynamic potential with an associated extremum principle.  Moreover, the corresponding Gibbs-like potential reduces to $\mathscr{G}=-\Theta H$. Finally, an enthalpy-like construction collapses to zero because constant $\sigma^2/m$ implies $dH=0$.  In this case, all of the variance must be accounted for by the sampling work.

Several partial derivative relations can be derived.  These relations follow directly from the entropy function $H$ and $\Theta$:
\begin{equation}
\begin{aligned}
\label{partialdiff}
\left(\frac{\partial H}{\partial \sigma^2}\right)_m
&= \frac{1}{\Theta} , \quad &
\left(\frac{\partial H}{\partial m}\right)_{\sigma^2}
&= -\frac{\sigma^2}{m \Theta}, \\
\left(\frac{\partial \sigma^2}{\partial m}\right)_H
&= \frac{\sigma^2}{m}, \quad &
\left(\frac{\partial \Theta}{\partial m}\right)_H
&= \frac{\Theta}{m}. \\
\end{aligned}
\end{equation}
From the equation of state $\Theta=2(\sigma^2+m\sigma_R^2)$ one also obtains the quasi-specific heat $C_m=1/2$, reminiscent of an equipartition-like result.

Finally, a natural third law also emerges from the thermodynamic structure. The entropy $H = \tfrac{1}{2}\log\!\left(\sigma^2/m+\sigma_R^2 \right) + \text{const}$ attains its minimum value as $m\to\infty$, where (up to a constant) it approaches the value $\log(\sigma_R)$. In sensory applications one often chooses the constant so that $H=0$ as $m\to\infty$, allowing $H$ to be interpreted as mutual information. As in thermodynamics, this limit is unattainable: not only is $m$ a resource, in the sensory model $m$ assumed to be finite and saturates at $m_{eq}$. As such, this illustrates the unattainability of zero entropy.

\section{Information gain and efficiency}
An important concept in thermodynamics is the efficiency by which energy supplied to a system can be converted into useful work. The analogous question in inference concerns the efficiency with which information can be acquired through sampling. In asymptotic estimation theory, estimator variance decreases with sample size as $1/m$ under asymptotic normality; the Cram\'er--Rao bound provides the corresponding efficient limit when regularity conditions hold. Fisher's information, however, is not a measure of uncertainty in the literal sense; it characterizes the curvature of the log-likelihood. By contrast, Shannon entropy provides an axiomatic measure of uncertainty, but is typically defined for fixed probability distributions and does not by itself encode how uncertainty changes with sampling. In (\ref{entropyH}), $H$ couples Fisher with Shannon information through an equation of uncertainty that is explicitly parameterized by sample size, allowing uncertainty reduction and inferential efficiency to be treated within a single state-space framework.  Next we explore how this function shapes information gain and efficiency.

\subsection{Information gain}
We begin by considering the information gained given a fixed sampling work budget.  This problem can be formulated by examining the sampling-driven terms from the first and second laws.  Consider the extremization of the information gained through the variation of 
\begin{equation}
\Delta \mathscr{I}= -\displaystyle\int_{m_a}^{m_b} \left(\frac{\partial H}{\partial m}\right)_{\sigma^2}\,dm
= \displaystyle\int_{m_a}^{m_b} \frac{1}{\Theta}\frac{\sigma^2}{m}\,dm,
\end{equation}
which we maximize subject to a fixed amount of sampling work
\begin{equation}
\int_{m_a}^{m_b} \frac{\sigma^2}{m} dm=\mathscr{W}.  
\end{equation}
The resulting Euler-Lagrange equation is given by
\begin{equation}
\frac{\partial}{\partial \sigma^2} \left(\frac{\sigma^2}{2m\left(\sigma^2+m\sigma_R^2 \right)}-\frac{\lambda \sigma^2}{m} \right)=0
\end{equation}
where $\lambda$ is the Lagrange multiplier.  The condition of optimality can be written in the compact form $\Theta/\sqrt{m}=\text{const}$ illustrating a trade-off between the state variable $\Theta$ and resource $m$ at a fixed sampling work budget. 

Solving for the optimal variance trajectory we obtain
\begin{equation}
\label{optimalvar}
\sigma_{\text{optimal}}^2(m)=\frac{\mathscr{W}+\sigma_R^2(m_b-m_a)}{2\left(\sqrt{m_b}-\sqrt{m_a}\right)}\sqrt{m}-m\sigma_R^2
\end{equation}
This solution can be shown to yield a maximum for the information gained by considering the second variation.  The optimal trajectory has an inverted u-shape due to the tradeoff between a sub-linear term $\sqrt{m}$ and linear term $m$.  Trajectories are nevertheless restricted by the non-negativity of variance.

From (\ref{optimalvar}), the optimal information gained can be evaluated to be
\begin{equation}
\Delta \mathscr{I}_{\text{optimal}}=\frac{1}{2}\log \frac{m_b}{m_a}-\frac{2\sigma_R^2\left(\sqrt{m_b}-\sqrt{m_a}\right)^2}{\mathscr{W}+(m_b-m_a)\sigma_R^2}
\end{equation}
Any other path will yield lower information gain for the same budget $\mathscr{W}$. The optimal value of $\Delta \mathscr{I}$ is non-decreasing in both $m_b/m_a$ and $\mathscr{W}$. From here, an upper bound can be established:
\begin{equation}
\label{maxinfo}
\Delta \mathscr{I}_{\text{max}}=\frac{1}{2}\log \frac{m_b}{m_a}.
\end{equation}
Regardless of how sampling takes place, or how the variance evolves, the difference in uncertainty going from $m_a$ to $m_b$ cannot exceed this upper bound. Moreover, since the derivation is carried out in the asymptotic limit, the bound is independent of the underlying distribution, provided that variance exists in the asymptotic regime. While this result may appear straightforward when variance is held constant, \emph{it is no longer immediate when variance is allowed to vary}. Thus, (\ref{maxinfo}) defines the maximum information gain for asymptotic inference across all admissible paths in $(m,\sigma^2)$ space --- a channel-capacity–like upper bound.

\subsection{Information efficiency}
Next we consider the efficiency by which information is generated or produced.
\subsubsection{Identities linking estimation and information theory}
Changes in uncertainty can be summarized by the two partial derivative relationships
\begin{gather}
\left(\frac{\partial H}{\partial \sigma^2}\right)_m= \frac{1}{\Theta} \label{dh/ds}\\
\left(\frac{\partial H}{\partial m}\right)_{\sigma^2}= -\frac{\sigma^2}{m \Theta} \label{dh/dm}
\end{gather}
Equation (\ref{dh/dm}) describes the change in uncertainty with varying sample size while variance is  fixed. Writing the effective noise level as $t=\sigma^2/m$, increasing $m$ corresponds to a decrease in $t$, i.e.\ a diffusion process in reverse time. Let $X \sim \mathcal N(0,\sigma_R^2)$, $Z\sim \mathcal N(0,1)$, and define $Y_t = X + \sqrt{t}\,Z$. The differential entropy $H(Y_t)$ then evolves according to the Gaussian form of de~Bruijn’s identity
\begin{equation}
\frac{d}{dt} H(Y_t)=\frac{1}{2}I(Y_t),
\end{equation}
where $I(Y_t)$ is the Fisher information. Equation (\ref{dh/dm}) is therefore equivalent to de~Bruijn’s identity under the change of variables $t=\sigma^2/m$, and describes entropy evolution under a reverse diffusion process \cite{cover1999elements}.

Equation (\ref{dh/ds}), on the other hand, specifies how entropy responds to changes in variance at fixed sample size. In communication theory, SNR measures channel quality while in inference, SNR instead quantifies residual uncertainty relative to inferential effort. Although the same ratio appears, its interpretation differs. For an additive Gaussian channel with signal-to-noise ratio $\mathrm{SNR}=\sigma^2/m\sigma_R^2$, the minimum mean-square error satisfies
\begin{equation}
\mathrm{MMSE}=\frac{2\sigma^2 \sigma_R^2}{\Theta}.
\end{equation}
Thus changes in entropy follows the Gaussian form of the I--MMSE relation \cite{guo2005mutual}
\begin{equation}
\frac{d}{d\,\mathrm{SNR}} I(X;Y)=\frac{1}{2}\,\mathrm{MMSE}.
\end{equation}
where $I(X;Y)$ is the mutual information. The appearance of both de~Bruijn’s identity and the I--MMSE relation reflects the Gaussian structure underlying the entropy function. More importantly, they arise as complementary coordinate projections of a single underlying thermodynamic structure. This unification is encoded by the state function $\Theta = 2(\sigma^2 + m\sigma_R^2)$, which governs joint sampling- and variance-driven information transduction.

\subsubsection{Carnot-like efficiency}
For fixed $m$, the quantity $\Theta$ is bounded from below. Its conditional minimum occurs at $\sigma^2=0$, where $\Theta_C = 2m\sigma_R^2$. The representation noise $\sigma_R^2$ therefore imposes a fundamental lower bound in $\Theta$ analogous to a third-law--type constraint. Next, we rewrite the MMSE as
\begin{equation}
\label{inferencecarnot}
\eta \triangleq \frac{\mathrm{MMSE}}{\sigma^2/m}
= \frac{\Theta_C}{\Theta},
\end{equation}
where normalization by the input signal variance $\sigma^2/m$ defines a dimensionless quantity $\eta$ characterizing local information efficiency. The efficiency obeys $0 \le \eta \le 1$, with $\eta=1$ attained trivially when $\sigma^2=0$, or when $\sigma^2 \ne 0$ and $m \to \infty$ which is impossible since $m$ is a resource. The existence of a temperature-like variable $\Theta$ together with a bounded efficiency permits Carnot-type reasoning at the level of inference. Here $\Theta_C$ plays the formal role of a cold-reservoir temperature, limited by the representation noise. Unlike thermal engines, where efficiency measures work extraction and therefore involves a one minus structure, the efficiency here measures the fraction of attainable certainty achieved.

While the preceding discussion concerns estimators that achieve asymptotic normality, the definition of information efficiency extends naturally to \emph{efficient estimators}. The Cram\'er--Rao lower bound gives $\mathrm{Var}(\hat{\theta}_m) \ge 1/m I(\theta)$, where $I(\theta)$ is the Fisher information associated with a single observation. From the uncertainty susceptibility, we have
\begin{equation}
\Theta=2\left(\sigma^2 + m\sigma_R^2\right) \ge 2\left(I(\theta)^{-1} + m\sigma_R^2\right).
\end{equation}
Relative to the baseline, estimator inefficiency appears as an excess contribution to the variance. Estimators that achieve the lower bound therefore have the highest information efficiency $\eta$.  As such, efficient estimators are directly analogous to Carnot engines in thermal physics.

\subsubsection{Principle of least entropy production}
An alternative perspective of (\ref{dh/ds}) can be gained by examing entropy production.  Define entropy production as
\begin{equation}
d\Sigma= \frac{1}{\Theta}\,d\sigma^2,
\end{equation}
where $d\Sigma$ denotes the path dependent entropy production associated with an incremental displacement in $(m,\sigma^2)$ space.  For a fixed $d\sigma^2$, we seek the condition that renders $d\Sigma$ stationary, i.e.
\begin{equation}
\delta(d\Sigma)=0,
\end{equation}
for which $\delta(d\Sigma)=\delta(\Theta^{-1})\, d\sigma^2$. If $m$ is varied alone, no stationary point exists because $\partial(\Theta^{-1})/\partial m \neq 0$ for finite $m$.  Stationarity therefore requires simultaneous variation of both coordinates such that $d\Theta=0$, i.e. $\Theta$ remains constant along the stationary path.  The second variation is positive, so this condition corresponds to a minimum.  Thus, small changes in variance produce the least possible entropy production when $\Theta$ is constant.

The requirement that $\Theta=2(\sigma^2+m\sigma_R^2)$ remain constant can appear counterintuitive, as an increase in variance must be accompanied by a decrease in sample size.  This reflects the fact that information efficiency decreases with both increasing variance and increasing sample size.  When $\Theta$ is held fixed, the marginal cost of reducing uncertainty remains uniform.  Thus the principle of local minimal information production is therefore not a statement about information itself, but about the cost associated with changes in uncertainty or information.

\subsubsection{Efficiency of information cyclic}
The measure of local efficiency can also be shown to constrain \emph{global information efficiency}. To see this, consider the ratio $\mathscr{I}/\mathscr{W}$, where $\mathscr{I}=\oint (\sigma^2/m\Theta)\, dm$ denotes the total information gained over a cycle and $\mathscr{W}=\oint (\sigma^2/m)\, dm$ represents the corresponding sampling work. This ratio can be interpreted as a weighted average of $\Theta^{-1}$ with respect to the measure $d\mathscr{W}=(\sigma^2/m)\, dm$ along the cycle. Consequently, it must lie between the minimum and maximum values of $\Theta^{-1}$ attained along the path. Moreover, assume that $\Theta^{-1}$ attains its supremum along the cycle at $\Theta_{\inf}^{-1}$. It follows that $\mathscr{I}/\mathscr{W}=\langle \Theta^{-1} \rangle \le \Theta_{\inf}^{-1}$. From the definition $\Theta = 2(\sigma^2 + m\sigma_R^2)$, it follows that $\Theta_{\inf} \ge 2 m_* \sigma_R^2$, where $m_*$ is the minimum value of $m$ attained along the cycle. Therefore $\mathscr{I}/\mathscr{W} \le \Theta_{C}(m_*)^{-1}$. Thus the global information efficiency is bounded ultimately by the local efficiency evaluated at the point where the sample size reaches its minimum along the cycle. In other words, the representation noise $\sigma_R^2$ sets the irreducible floor, while the smallest attainable sample size determines where this floor is realized. Global efficiency is limited by the extreme values of $\Theta$ encountered along the path. In particular, the upper bound depends on access to the smallest achievable sample size. This mirrors Carnot's result in thermodynamics, where the efficiency of a cycle depends only on the availability of extreme temperatures.

\section{Discussion}
This paper introduces a thermodynamic framework to asymptotic inference by adopting sample size $m$ and variance $\sigma^2$ as macroscopic state variables. Within this description, the differential entropy of the asymptotic estimator distribution is calculated in (\ref{entropyH}), and relationships formally analogous to the laws of thermodynamics can be derived. These include a reversed second-law--type inequality for mean inference (\ref{sensorysecondlaw}), the existence of temperature-like state function $\Theta = 2(\sigma^2 + m\sigma_R^2)$ leading to a first-law--type balance relation (\ref{firstlaw}), and a lower bound on uncertainty set by representation noise that plays a role analogous to a third law. Taken together, these elements define a coherent state-space structure governing information acquisition in the asymptotic regime permitting the derivation of fundamental results in information gain and efficiency.

The correspondence developed here is not accidental. The same thermodynamic mathematical structure—state functions, balance relations, and inequality constraints—appears in both equilibrium thermal physics and asymptotic inference because both rely on additivity and central-limit-type aggregation of many independent contributions. What differs is the directionality. In ensemble physics, macroscopic constraints are imposed and microscopic degrees of freedom are averaged over, leading to entropy increase under coarse-graining. In inference, microscopic observations are accumulated to infer macroscopic parameters, leading to entropy decrease under repeated sampling. In this sense, inferential dynamics becomes a reversed counterpart of thermal physics, with a reversed second law, an exchange of intensive and extensive roles, and an interchange between those quantities that are state functions and those that are path dependent. The thermodynamic structure therefore does not encode a particular physical mechanism, but instead constrains both the forward ensemble approach and its inverse inferential process as two realizations of a common formal framework.

When extending this approach to metrology, one would expect the same picture and mathematics to apply, but with several important caveats. First, the theory is asymptotic: it assumes large sample size $m$, asymptotic normality of estimators, and sufficient smoothness to treat $m$ as a continuous variable. These assumptions are not always satisfied in practical settings of measurement. Second, the derivation of a second-law--type inequality relies on some minimum admissibility constrains on sampling, as well as a fluctuation-scaling asymmetry. In sensory systems fluctuation-scaling is well supported empirically, as larger stimulus magnitudes are typically accompanied by larger fluctuations. In metrology, such scaling is not automatic, although it is often reasonable to expect that larger quantities are associated with larger errors or fluctuations. Finally, and most importantly, metrology lacks a natural analogue of the empirical relation $F = kH$ linking uncertainty to a directly observable quantity. As a result, the thermodynamic quantities defined here retain theoretical meaning but do not correspond directly to measurable observables in the same way they do in the sensory domain.

The theoretical framework described here concerns systems that evolve, while assuming that inference proceeds across statistically independent, non-overlapping epochs. Each epoch is treated as locally stationary, so that classical estimation theory applies within each epoch with no information carried forward. This provides a tractable description of inference in a changing system, although it is necessarily an idealization. If, in practice, epochs are not fully independent, the main consequence will be the inflation of the observed variance. For example, drift in the underlying mean can introduce additional between-epoch variability, so that the recorded dispersion exceeds the nominal value of $\sigma^2/m$ when such effects are not modelled explicitly. From the standpoint of the present theory, such deviations—provided they can be represented as additional stochastic variability and do not disrupt the asymptotic properties—will not alter the geometric structure of the $(m,\sigma^2)$ state space, but instead shift the operating point of the system to that of a lower efficiency.

Several directions for future work follow naturally from this framework. While the present theory focuses on a thermodynamic description of inference, the identification of variance as an energy-like variable also suggests a similar \emph{statistical physics} viewpoint. If the individual samples are denoted by $x_1,\dots,x_m$, a microstate may be identified with the vector $\mathbf{x}\in\mathbb{R}^m$. Fixing the sample mean $\bar{x}$ and sample variance $\hat{\sigma}^2 = \tfrac{1}{m}\sum_i (x_i-\bar{x})^2$ restricts the microstates to a sphere of fixed radius, analogous to a kinetic energy shell, and thereby defining a microcanonical ensemble. Alternatively, maximizing entropy subject to a constraint on $\langle\sigma^2\rangle$ yields a canonical distribution $P(\mathbf{x}) \propto \exp\!\left(-\beta \sum_i (x_i-\bar{x})^2\right)$, with $\beta = \Theta^{-1}$. Moreover, if the sample entropy $\hat{H}$ is constructed from the sample variance $\hat{\sigma}^2$, this formulation allows for considerations of fluctuations in inference. In particular, large-deviation results such as Sanov’s theorem suggest a route towards the construction of fluctuation theorems for inference \cite{cover1999elements}. Finally, although the present work is formulated in the large-$m$ limit, systematic corrections can be explored for specific sampling distributions, such as exponential-family, using the Edgeworth or related asymptotic expansions.
\appendix
\section{Appendix 1: Sensory background}
This appendix summarizes the empirical and theoretical background of the sensory work that motivated the present paper. The goal is not provide a full account of the entire theory, but to show how the \emph{ideal sensory unit} (cf Section~\ref{sec:inference}) reproduces results and predictions reported previously.  Full details of the theory can be found elsewhere, e.g. \cite{wong2025universal}.

\subsection{Linearization of sampling dynamics}
A standard way to solve the nonlinear state-space model (\ref{diffeqn})-(\ref{entropyH}) is to linearize it around a fixed point. Here the fixed point is the steady-state sample size $m_{eq}$. For constant stimulus, expand $\dot{m}=g(m,m_{eq})$ to first order about $m=m_{eq}$ and use $g(m_{eq},m_{eq})=0$ to obtain
\begin{equation}
\label{relax}
\frac{dm}{dt} = -a\, (m - m_{eq}),
\end{equation}
where $a=-(\partial g/\partial m)\rvert_{m=m_{eq}}>0$ is the relaxation rate.

\subsection{A Tweedie scaling ansatz and an expression for entropy}
To obtain closed-form expressions for the response of the sensory neuron (i.e. the firing rate), one can adopt a fluctuation-scaling-type model for the input statistics. The canonical example arises in vision, where photon statistics at the level of the photoreceptor are well approximated by a Poisson distribution. In particular, if the input distribution lies in the Tweedie family, then over a wide range of natural signals one can write $\sigma^2 \propto \mu^p$, and correspondingly $m_{eq}\propto \mu^{p/2}$.

Choosing the additive constant in (\ref{entropyH}) to be $\tfrac{1}{2}\log(\sigma_R^2)$ renders $H$ to be a mutual information. Writing the mean input as $\mu = I+\delta I$, with stimulus intensity $I$ and additive noise $\delta I$, we have
\begin{gather}
\label{net}
H = \frac{1}{2} \log \left( 1 + \frac{\beta (I + \delta I)^p}{m} \right), \\
m_{eq} = (I + \delta I)^{p/2},
\end{gather}
where the parameters $k$, $\beta$, $p$, $\delta I$ and $a$ are taken to be positive. In practice, these parameters are estimated by fitting to response data, although overfitting is often a challenge. These exact equations have been developed across a series of publications beginning in the 1970's \cite{norwich1977information,norwich1993information,norwich1995universal,wong1997physics,wong2020rate}. Given a stimulus profile $I(t)$, one can compute the time-dependent firing rate via $F=kH$ in (\ref{fkh}).

\subsection{Sensory adaptation}
Adaptation refers to the neural response to stimulation whereby firing rate rises from a spontaneous level, reaches a peak shortly after stimulus onset, and then decays monotonically to a steady-state value. For a constant stimulus $t\ge 0$, $m_{eq}$ is constant and (\ref{relax}) has the solution
\begin{equation}
\label{msolution}
m(t) = m(0)e^{-at} + m_{eq} (1 - e^{-at}).
\end{equation}
By continuity we take $m(0)$ to be the equilibrium sample size at $I=0$, i.e. $m(0)=\delta I^{p/2}$ \cite{wong2020rate}. Substituting (\ref{msolution}) into (\ref{net}) and using $F=kH$ we obtain the time-dependent response
\begin{equation}
\label{adaptation}
F(I, t) = \frac{k}{2} \log \left( 1 + \frac{\beta (I + \delta I)^p}{ \delta I^{p/2} e^{-at} + (I + \delta I)^{p/2} (1 - e^{-at}) } \right).
\end{equation}

This expression has three natural fixed points typically observed in adaptation responses: the spontaneous rate $\mathrm{SR}=F(0,\infty)$, the peak rate $\mathrm{PR}=F(I,0)$, and the steady-state rate $\mathrm{SS}=F(I,\infty)$,
\begin{gather}
\label{SRPRSS}
\mathrm{SR} = \frac{k}{2} \log \left( 1 + \beta \delta I^{p/2} \right), \\
\mathrm{PR} = \frac{k}{2} \log \left( 1 + \frac{\beta (I + \delta I)^p}{ \delta I^{p/2} } \right), \\
\mathrm{SS} = \frac{k}{2} \log \left( 1 + \beta (I + \delta I)^{p/2} \right).
\end{gather}

\subsection{A universal inequality for sensory adaptation}
The three points, together with the curvature of the logarithm, can be used to derive an inequality governing the steady-state activity:
\begin{equation}
\label{inequality}
\sqrt{\mathrm{PR}\times \mathrm{SR}} \le \mathrm{SS} \le \frac{\mathrm{PR}+\mathrm{SR}}{2}.
\end{equation}
That is, the steady-state response must lie between the geometric and arithmetic means of the spontaneous and peak activities \cite{wong2023fundamental}. The simplicity of (\ref{inequality}) belies its reach: a constraint that is obeyed almost universally across different sensory modalities despite differences in mechanisms and stimulation paradigms.

The inequality was evaluated across two comprehensive analyses encompassing 40 separate studies and more than 400 individual recordings of adaptation from different sensory modalities, animal species, and laboratories. The first analysis examined auditory adaptation data from the 1970's to the present. Hearing is particularly well suited for repeated adaptation experiments because auditory stimuli can be precisely controlled and single-unit responses readily isolated. Across datasets, $\mathrm{SS}$ was plotted against $\mathrm{PR}$ and compared with the predicted bounds in (\ref{inequality}). No parameter fitting was required: $\mathrm{SR}$ was obtained directly from the baseline activity. Figure~\ref{fig1}, reproduced from \cite{wong2023fundamental}, shows empirical trajectories superimposed on the theoretical bounds.

\begin{figure*}[!h]
\begin{center}
\includegraphics[width=.89\textwidth]{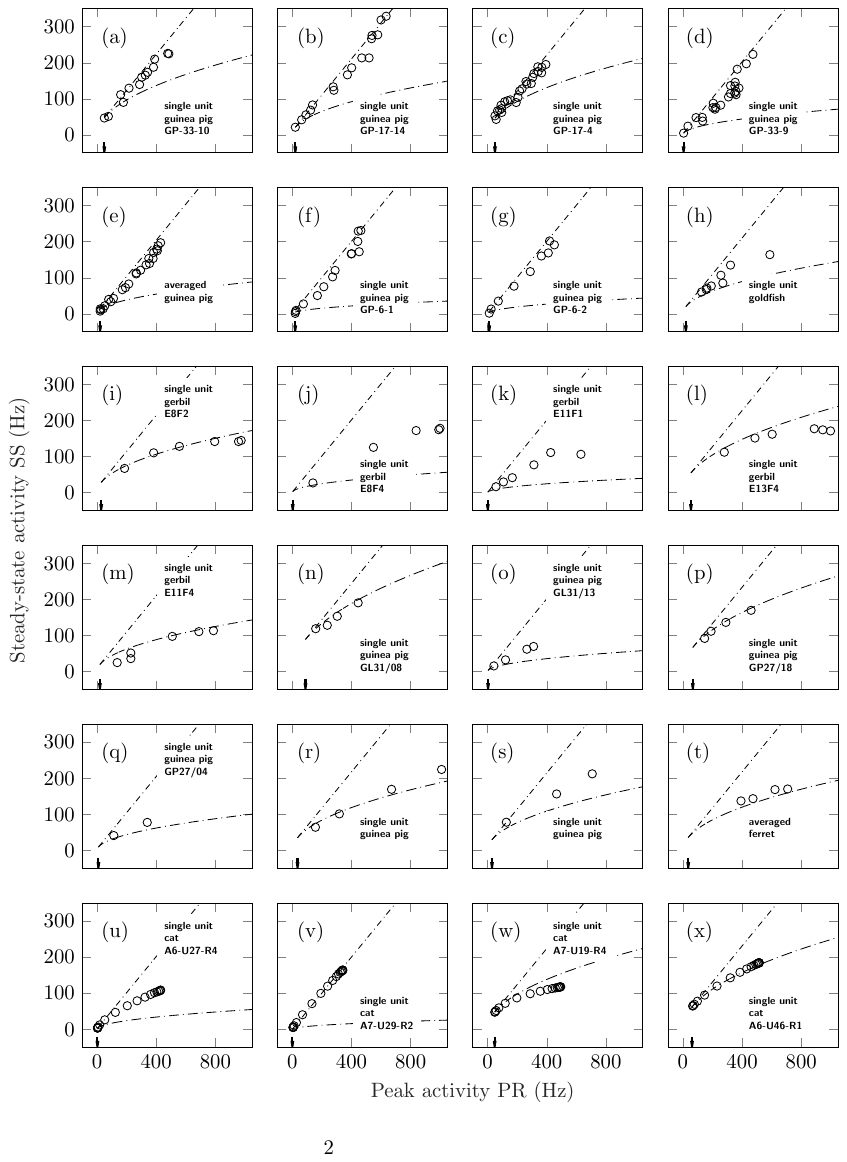}
\caption{Figure reproduced from \cite{wong2023fundamental}. Steady-state activity plotted versus peak activity for a number of auditory studies.  In all panels, the dashed lines show the theoretical upper and lower bounds of (\ref{inequality}).  No fitted parameters were required to plot these bounds.  SS vs PR from (a-g) single or averaged guinea pig fibre recordings (figs. 11a-d, 12, 17a and 4a from \cite{smith1975short}); (h) saccular nerve fibres of goldfish (fig. 3 from \cite{fay1978coding}); (i-m) single fibre gerbil recordings (figs. 4 and 5 from \cite{westerman1984rapid}); (n-q) single guinea pig fibre recordings (figs. 1 and 2 from \cite{yates1985very}); (r-s) single guinea pig fibre recordings (figs 3a and 3b from \cite{muller1991relationship}); (t) averaged ferret data (fig. 6 from \cite{sumner2012auditory}); (u-x) single cat fibre recordings (figs. 12e-h from \cite{peterson2021simplified}).  The spontaneous activity of each unit is indicated by an arrow pointing towards the x-axis.}\label{fig1}
\end{center}
\end{figure*}

In a second study, the analysis was extended across other modalities \cite{wong2021consilience}. Since spontaneous activity is not always reported, one can use the lower-bound approximation $\mathrm{SS} \approx \sqrt{\mathrm{PR}\times\mathrm{SR}}$. On a log--log plot, this gives the prediction $\mathrm{SS} \propto \mathrm{PR}^{1/2}$. Across eight major sensory modalities (proprioception, touch, taste, hearing, vision, smell, electroreception, and temperature) and four animal phyla (Chordata, Arthropoda, Mollusca, and Cnidaria), most datasets exhibited slopes near $1/2$ with high correlation.

The convergence of these findings implies a form of universality. Neurons of different types do not share identical mechanisms, yet the same quantitative relation holds across modalities, species, and measurement paradigms. For example, data recorded nearly a century ago by Adrian and Zottermann \cite{adrian1926impulsesb} show the same adherence as modern recordings, underscoring robustness across time and methodology.

\section{Appendix 2: Proof of the information inequality}
We summarize the proof of the information inequality in (\ref{sensorysecondlaw}).  Recall that the information differential is defined by
\begin{equation}
d\mathscr{I} = -\left(\frac{\partial H}{\partial m}\right) d m,
\end{equation}
where $H$ is the entropy state function in (\ref{entropyH}) and $dm$ is evaluated along trajectories generated by $\dot m=g(m,m_{eq})$.  Since $\sigma^2$ is a monotonic function of the stimulus parameter $\mu$, we will refer to the state space using coordinates $(\mu,m)$ for the purposes of analyzing stimulus–response cycles.

\begin{theorem}[Information inequality]
Let $\mu(t)$ be a cyclic stimulus that generates a closed stimulus--response trajectory in $(\mu,m)$ via the dynamical state-space model described in Section~\ref{sec:inference}.
The change in information over a cycle obeys the second-law--type inequality
\begin{equation}
\oint d \mathscr{I} \ge 0.
\end{equation}
\end{theorem}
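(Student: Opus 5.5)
The plan is to make the integrand explicit, slice the closed trajectory by horizontal levels of $m$, and compare the upward and downward crossings at each level using the sign structure imposed by $\dot m = g(m,m_{eq})$.

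\textbf{Step 1: the integrand.} From (\ref{dh/dm}),
\begin{equation*}
d\mathscr{I} = f(\mu,m)\,dm, \qquad f(\mu,m)=\frac{\sigma^2(\mu)}{2m\left(\sigma^2(\mu)+m\sigma_R^2\right)}.
\end{equation*}
Differentiating in $\sigma^2$ gives $\partial f/\partial\sigma^2 = \sigma_R^2/\bigl(2(\sigma^2+m\sigma_R^2)^2\bigr)\ge 0$. Since $\sigma^2(\mu)$ is non-decreasing by condition (iii), $f$ is non-decreasing in $\mu$ at each fixed $m$. This fluctuation-scaling monotonicity is the only place the form of $H$ enters.

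\textbf{Step 2: the crossing lemma.} Condition (i) says $m$ tracks $m_{eq}$ monotonically. I will read this as: $\operatorname{sign}\dot m = \operatorname{sign}\bigl(m_{eq}(\mu)-m\bigr)$ wherever $\dot m\neq 0$.

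Suppose the trajectory crosses the level $m=c$ upward at stimulus value $\mu_u$ and downward at stimulus value $\mu_d$. Then $m_{eq}(\mu_u)>c>m_{eq}(\mu_d)$. Because $m_{eq}$ is non-decreasing, this forces $\mu_u>\mu_d$: every upward crossing of a level lies at a strictly larger stimulus than every downward crossing of the same level. Geometrically the cycle is traversed counterclockwise in the $(\mu,m)$ plane, with $\mu$ on the horizontal axis. This is the hysteresis of $m$ lagging behind $m_{eq}(\mu)$.

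\textbf{Step 3: integrate level by level.} Use a change of variables along the parametrized curve (a Banach-indicatrix/co-area argument). This gives
\begin{equation*}
\oint f\,dm=\int_{m_{\min}}^{m_{\max}}\Bigl[\sum_{\text{up}} f(\mu_u(c),c)-\sum_{\text{down}} f(\mu_d(c),c)\Bigr]\,dc .
\end{equation*}
For almost every $c$ the crossings are transversal and finite in number. Because the curve is closed, the numbers of upward and downward crossings of level $c$ are equal. Pair them arbitrarily. By Step 2 each upward $\mu_u$ exceeds each downward $\mu_d$, and by Step 1 $f(\mu_u,c)\ge f(\mu_d,c)$. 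So the bracket is non-negative for almost every $c$, and the integral is $\ge 0$.

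If the trajectory is a simple closed curve, the same conclusion follows more quickly from Green's theorem: $\oint f\,dm=\iint \partial_\mu f\,d\mu\,dm\ge 0$ for counterclockwise orientation.

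\textbf{Main obstacle.} The delicate step is the regularity needed for Step 3. This includes:
\begin{itemize}
\item non-transversal crossings and intervals on which $\dot m=0$;
\item flat stretches of $m_{eq}$, where $\mu$ may lie in a plateau;
\item justifying that the up- and down-crossing counts match for almost every $c$.
\end{itemize}
I would first assume $\mu(t)$ and $m(t)$ are $C^1$ and invoke Sard's theorem, which makes almost every level regular. Plateaus are harmless because the inequalities in Step 2 are strict in $m_{eq}$, not in $\mu$. Segments with $dm=0$ contribute nothing to $\oint f\,dm$ in any case.
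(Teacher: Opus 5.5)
Your proof is correct, but it reaches the inequality by a different route from the paper. The paper restricts to a piecewise-smooth \emph{simple} loop and applies Green's theorem, writing $\oint d\mathscr{I}$ as the area integral of $-\partial^2 H/\partial m\,\partial\mu\ge 0$. It then argues counterclockwise orientation informally from the rising and falling phases of the stimulus; this is essentially the shortcut you note at the end of Step 3.

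Your level-set decomposition with the crossing lemma gains three things.

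First, it needs no simplicity, so it covers self-intersecting trajectories, such as those produced by stimuli with several peaks per period. There Green's theorem would have to be weighted by winding numbers, and your Step 2 in effect shows those winding numbers are nonnegative everywhere.

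Second, it turns the orientation claim into an actual lemma, derived from the sign of $g$ and the monotonicity of $m_{eq}$. The paper's phase argument asserts that $m$ rises whenever $\mu$ rises. That fails when $m$ is still relaxing downward after a fast drop in $\mu$, although the counterclockwise orientation still holds, as your lemma shows.

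Third, it uses only that $\sigma^2(\mu)$ is non-decreasing, not that it is differentiable as the paper's mixed-derivative formula requires. It also does not need $\mu(t)$ to be continuous: jumps in $\mu$, as in the step stimuli of the adaptation experiments, occur at fixed $m$ and contribute $dm=0$. So your $C^1$ assumption on $\mu$ can be relaxed.

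In exchange, the paper's route is shorter and presents $\oint d\mathscr{I}$ as an enclosed area weighted by the mixed derivative, which matches its geometric reading of cyclic sampling work. Your route pays for its generality with the Sard and area-formula bookkeeping, and you have set that up correctly: regular levels, finitely many transversal crossings, and equal up- and down-crossing counts for $c\neq m(0)$.
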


\begin{proof}
Let $C$ denote the closed curve traced in $(\mu,m)$ space over one stimulus cycle. For any piecewise-smooth simple loop $C$ that is positively oriented (counter-clockwise) and bounds a region $A$, Green's theorem gives
\begin{equation}
\label{green}
\oint_C d\mathscr{I}
= -\oint_C \frac{\partial H}{\partial m}\,dm
= -\iint_A \frac{\partial^2 H}{\partial m\,\partial \mu}\,dA.
\end{equation}
For the sensory uncertainty state function (\ref{entropyH}), we evaluate the mixed derivative to yield
\begin{equation}
\label{mixedderivative}
\frac{\partial^2 H}{\partial m\,\partial \mu}
= -\frac{\sigma_R^2}{2m^2\left( \sigma_R^2 + \sigma^2/m \right)^2}\,\frac{d\sigma^2}{d\mu}.
\end{equation}
Under monotonicity of $\sigma^2$ and $\mu$, $d\sigma^2/d\mu\ge 0$, the mixed derivative is non-positive throughout $A$.
Therefore the area integral in (\ref{green}) is non-negative, implying $\oint_C d\mathscr{I}\ge 0$ for any positively oriented loop.

To justify the orientation for realizable cycles, the sampling dynamics always relaxes toward $m_{eq}(\mu)$: $m$ tracks the equilibrium $m_{eq}(\mu)$ with no overshoot such that increases in $\mu$ drive up $m$, while decreases in $\mu$ drive down $m$. Consequently, $dm/d\mu \ge 0$ along the increasing phase of the stimulus and $dm/d\mu \le 0$ along the decreasing phase, enforcing a counterclockwise traversal of the loop in $(\mu,m)$ space. Hence, physically realizable sensory cycles are positively oriented, and the inequality follows.
\end{proof}

The second-law inequality, in a more restricted form, has already been tested extensively in neurophysiological recordings. A cyclic presentation of a stimulus, in which a sensory neuron is initially at rest, then stimulated with a constant input (i.e.\ adaptation) until a steady state is reached, and finally returned to rest, constitutes a cyclic process compatible with the second law. The sum of differences in the firing rates over the cycle can be calculated from $-\oint (\partial H / \partial m)\, dm$ and $F = kH$, yielding $(\mathrm{PR}-\mathrm{SS}) + (\mathrm{TR}-\mathrm{SR})$, where $\mathrm{PR}$, $\mathrm{SS}$, $\mathrm{TR}$, and $\mathrm{SR}$ denote the peak, steady-state, trough, and spontaneous firing rates respectively. Since sensory neurons are observed to obey the inequality~(\ref{inequality}), as well as the associated inverted inequality $\sqrt{\mathrm{TR} \times \mathrm{SS}} \le \mathrm{SR} \le (\mathrm{TR} + \mathrm{SS})/2$, a straightforward calculation shows that $(\mathrm{PR}-\mathrm{SS}) + (\mathrm{TR}-\mathrm{SR}) \ge 0$ as is required for the second-law \cite{wong2025universal}. Moreover, this exact result can also be obtained directly from the property of \emph{supermodularity} imposed by the mixed derivative in (\ref{mixedderivative}).

In metrological settings, sampling strategies need not be governed by the same dynamical rules as in sensory systems; however, they are still subject to minimal admissibility constraints. These include a monotonic relationship in which larger parameter magnitudes are associated with larger intrinsic errors or variance, as well as an adaptive response whereby increased uncertainty drives increased sampling work. In addition, changes in sampling strategy do not occur instantaneously, but relax toward an optimal operating point over finite time. Under these weak and generic conditions, closed cycles are constrained to be positively oriented in $(\mu,m)$ space with a non-negative mixed derivative, similar to the sensory case. Consequently, the cyclic inequality continues to hold, extending the second-law–type constraint to metrological inference in estimating the mean.
 \bibliographystyle{plain}
\bibliography{thermodynamics}
\end{document}